\definecolor{aliceblue}{rgb}{0.94, 0.97, 1.0}
\definecolor{blizzardblue}{rgb}{0.67, 0.9, 0.93}
\definecolor{antiquebrass}{rgb}{0.8, 0.58, 0.46}
\definecolor{beaublue}{rgb}{0.74, 0.83, 0.9}
\newtheorem{lemma}{Lemma}
\begin{document}

\title{Towards Big data processing in IoT: network management for online edge data processing}

\author{Shuo~Wan, Jiaxun~Lu, Pingyi~Fan,~\IEEEmembership{Senior Member,~IEEE} and Khaled~B.~Letaief{*},~\IEEEmembership{Fellow,~IEEE}\\

\small
Tsinghua National Laboratory for Information Science and Technology(TNList),\\

Department of Electronic Engineering, Tsinghua University, Beijing, P.R. China\\
E-mail: wan-s17@mails.tsinghua.edu.cn, lujx14@mails.tsinghua.edu.cn, ~fpy@tsinghua.edu.cn\\
{*}Department of Electronic Engineering, Hong Kong University of Science and Technology, Hong Kong\\
Email: eekhaled@ece.ust.hk}

\maketitle

\begin{abstract}
Heavy data load and wide cover range have always been crucial problems for internet of things (IoT). However, in mobile-edge computing (MEC) network, the huge data can be partly processed at the edge. In this paper, a MEC-based big data analysis network is discussed. The raw data generated by distributed network terminals are collected and processed by edge servers. The edge servers split out a large sum of redundant data and transmit extracted information to the center cloud for further analysis. However, for consideration of limited edge computation ability, part of the raw data in huge data sources may be directly transmitted to the cloud. To manage limited resources online, we propose an algorithm based on Lyapunov optimization to jointly optimize the policy of edge processor frequency, transmission power and bandwidth allocation. The algorithm aims at stabilizing data processing delay and saving energy without knowing probability distributions of data sources. The proposed network management algorithm may contribute to big data processing in future IoT.

\end{abstract}

\begin{IEEEkeywords}
Internet of things, Big data, Edge computing, Network management
\end{IEEEkeywords}

%
\IEEEpeerreviewmaketitle

\section{Introduction}
%
%
%
%
\IEEEPARstart{T}{he} internet of things (IoT) has emerged as a huge network, which extends connected agents beyond standard devices to any range of traditionally non-internet-enabled devices. In IoT, a large range of everyday objects such as vehicles, home appliances and street lamps may all enter the network and exchange data. The extension will result in an extraordinary increase of data amount and network cover range, which is far beyond the capability of the existing network. Recently, Mobile-edge computing (MEC) has emerged as a promising technique in IoT. By deploying cloud-like infrastructure in the vicinity of edge devices, a large proportion of computing load can be distributed to the edge \cite{7883946}.

In the literature, the problem of computation offloading, network resource allocation and related network structure designs in MEC have been broadly studied in various models \cite{he2018integrated,bi2018computation,rimal2017cloudlet,mao2017stochastic,zheng2019fog,park2016joint}. In \cite{he2018integrated}, the authors employed deep reinforcement learning to allocate caching, computing and communication resources for MEC system in vehicle networks. In \cite{bi2018computation}, the authors optimized the offload decision and resource allocation to obtain a maximum computation rate for a wireless powered MEC system. Considering the combination of MEC and existing communication service, a novel two-layer TDMA-based unified resource management scheme was proposed to handle both conventional communication service and MEC data traffic at the same time \cite{rimal2017cloudlet}. In \cite{mao2017stochastic}, the authors jointly optimized the radio and computational resource for Multi-user MEC computing system. In \cite{zheng2019fog}, notions of energy harvesting were further considered. In addition to the edge, the cloud was also taken into consideration in \cite{park2016joint}.

The MEC system design considering computation task offloading has been sufficiently investigated in previous works. However, for IoT big data processing, MEC server may also serves to process local data at the edge \cite{shi2016edge,shi2016promise,8336572}. In \cite{shi2016edge}, the authors discussed the application of MEC in data processing. In \cite{shi2016promise}, the authors indicated that edge servers can process part of the data rather than completely deliver them to the cloud. Then in \cite{8336572}, the authors proposed a scheme for this system. In the field of edge computing, the algorithm design for distributed data processing is still an open problem.

\begin{figure}
  \centering
  \includegraphics[width=0.8\columnwidth]{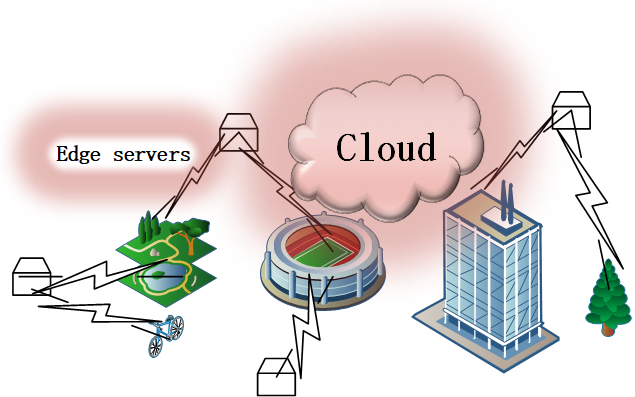}
  \caption{System structure of distributed edge data processing.}\label{structure}
\end{figure}

In this paper, we consider an MEC-based distributed data processing system as shown in Fig .\ref{structure}. In this system, servers at the network edge collect data from around data sources and conduct initial steps of data processing. Consider the common redundancy in raw data \cite{wan2019minor}, the edge processing will wipe out a large amount of redundant data and transmit extracted information to the cloud. It is assumed that the extracted information takes only a little bandwidth to transmit. As the edge processing speed is limited, part of the raw data will be transmitted to the cloud in cases of high data rate. As the communication resources are also limited, the rest data will be temporarily stored, which results in waiting delay.

Based on Lyapunov optimization, we proposed an algorithm to derive an online policy of network management. Without knowing probability distributions of arriving data, it can smartly manage network sources to stabilize delay while saving energy.
When data rate reduces, edge servers can lower down their processor frequency to save energy. In cases of high data rate, data offloading assists to raise edge processing speed. Furthermore, the allocation of bandwidth for data offloading can also adjust the edge processing capability based on their buffer lengths. In condition of high data rate, the smart design of bandwidth allocation can further stabilize edge processing delay. In order to figure out the policy design, we propose a network management algorithm based on Lyapunov optimization.

\section{System model}

We consider an IoT network for online data collection and analysis. The data sources are distributed in a wide range. The data are supposed to be generated randomly and transmitted to IoT edge servers. The distributed edge processing results are sent to center cloud $C$ for further analysis. The IoT network management policy is determined per time slot. The edge servers are represented by $E=\{ e_{k} \}$, where index $k$ belongs to set $\mathbb{K}=\{ 0, 1, 2, ......, K \}$ and the discrete time slot set is denoted as $\mathbb{T}=\{ 0, 1, 2, ...... \}$. In this section, we will introduce the model of data collection and processing.

\subsection{Data collection}
The widely distributed network devices generate data indicating local information. Edge servers collect data from their around devices. It is supposed that edge server $e_{k}$ collects $A_{k}(t)$ bits data during time slot $t$, where $k \in \mathbb{K}$ and $t \in \mathbb{T}$. The collected data will be temporarily stored in a data buffer for processing. Suppose the edge server $e_{k}$ is able to deal with $D_{k}(t)$ bits data in time slot $t$. Its data buffer length $Q_{k}(t)$ is updated by
\begin{equation}\label{qup}
Q_{k}(t+1)={\rm max}\{ Q_{k}(t)+A_{k}(t)-D_{k}(t), 0 \}
\end{equation}
It is assumed that $A_{k}(t)$ are independent among different devices and different time slots. $A_{k}(t)$ is supposed to satisfy poisson distribution with $E[A_{k}(t)] = \lambda_{k}$. Besides,for consideration of rate limitation in practical network, it is supposed that $A_{k}(t)$ is bounded by $[0, A_{max}]$. That is, any $A_{k}(t)$ larger than $A_{max}$ will be cut as $A_{max}$.

\subsection{Edge computation model}
It is assumed edge server $e_{k}$ has the capability to deal with $D_{k}(t)$ bits data in time slot $t$. Among the $D_{k}(t)$ bits data, $D_{l,k}(t)$ bits data are processed locally by edge server and $D_{tx,k}(t)$ bits data are transmitted to center. It is assumed that the edge servers will split out a large sum of redundant data and the extracted results take only a small proportion of bandwidth for transmission. Furthermore, the limited edge processing speed may not catch up with upcoming data rate. Then a large proportion of bandwidth can be allocated to $e_{k}$ for offloading data.

\subsubsection{Edge data processing}
It is assumed that the edge server $e_{k}$ needs $L_{k}$ CPU cycles to precess one bit data, which depends on the applied algorithm \cite{mao2017stochastic}. The processor cycle frequency of $e_{k}$ at time $t$ is denoted as $f_{k}(t)$ with $f_{k}(t) \in [0,f_{max}]$. Then $D_{l,k}(t)$ is
\begin{equation}
D_{l,k}(t)=\frac{\tau f_{k}(t)}{L_{k}}
\end{equation}
where $\tau$ is the time slot length. The power consumption rate of edge data processing \cite{burd1996processor} by $e_{k}$ is
\begin{equation}
P_{l,k}(t)=\kappa _{k}f_{k}^{3}(t)
\end{equation}
where $\kappa _{k}$ is the effective switched capacitance \cite{burd1996processor} of $e_{k}$, which is determined by chip structure.

\subsubsection{Data transmission model}
The edge data processing is limited by edge processor and energy resources. To lower down the delay, the network communication bandwidth is allocated to edge servers for transmission of collected data. It is assumed that the wireless channels between edge servers and center cloud are i.i.d. frequency-flat block fading \cite{lu2018beyond}. Thus the channel power gain between edge server $e_{k}$ and center cloud can be denoted by $\Gamma_{k}(t)=\gamma_{k}(t)g_{0}(\frac{d_{0}}{d_{k}})^{\theta}$, where $\gamma_{k}(t)$ is the small-scale fading channel power gain, $g_{0}$ is the pass loss constant, $\theta$ is the pass loss exponent, $d_{0}$ is reference distance and $d_{k}$ is the distance between $e_{k}$ and center cloud. Under the application of FDMA, by Shannon formula \cite{cover2006elements}, the data transmission capacity between $e_{k}$ and center cloud in time slot $t$ is
\begin{equation}
D_{tx,k}(t)=a_{k}(t)W\tau{\rm log}_{2}(1+\frac{\Gamma_{k}(t)p_{tx,k}(t)}{a_{k}(t)N_{0}W})
\end{equation}
where $a_{k}(t)$ is the proportion of the bandwidth allocated to $e_{k}$, $p_{tx,k}(t)$ is the transmission power with $p_{tx,k}(t)\in [0,p_{tx,max}]$, $W$ is the entire bandwidth for data transmission and $N_{0}$ is the noise power spectral density. $a_{t}=\{ a_{1}(t), a_{2}(t), ......, a_{K}(t) \}$ is the bandwidth allocation vector at time $t$ with $\sum_{i=1}^{K}a_{k}(t)=1$ and $a_{k}(t) \geq0$.

\section{Problem formulation}
The data offloading policy focus on the power consumption with respect to edge data processing and data transmission. In time slot $t$, the power consumption of edge processing of $e_{k}$ is denoted as $p_{l,k}(t)$. The data transmission power of $e_{k}$ in time slot $t$ is $p_{tx,k}(t)$. Then the power consumption of $e_{k}$ in time slot $t$ is
\begin{equation}
P_{k}(t)=p_{l,k}(t)+p_{tx,k}(t)
\end{equation}
Then the average weighted sum power consumption is
\begin{equation}
\overline{P}=\underset{T\rightarrow \infty }{lim}\frac{1}{T}\sum_{t=1}^{T}E\left [ \sum_{k=1}^{K}w_{k}P_{k}(t) \right ]
\end{equation}
where $w_{k}$ is a positive parameter with regard to edge server $e_{k}$, which can be adjusted to balance power management of all edge nodes. As the system performance metrics, $\overline{P}$ is the long-term edge power consumption. The data offloading policy with respect to $\overline{P}$ can be derived by statistical optimization.

The data collected by edge servers will be temporarily stored in a data buffer. In this case, the data queuing delay is the metrics of edge system service quality. By Little's Law \cite{little1961proof}, the average queuing delay of a queuing agent is proportional to the average queuing length. Therefore, the average data amount in edge data memory is viewed as the system service quality metrics. The long-term queuing length for edge server $e_{k}$ is defined as
\begin{equation}
\overline{Q}_{k}=\underset{T\rightarrow \infty }{lim}\frac{1}{T}\sum_{t=1}^{T}E[Q_{k}(t)]
\end{equation}

The network management policy in time slot $t$ is denoted as $\mathbf{\Phi }(t)=[\mathbf{f}(t),\mathbf{p}_{tx}(t),\mathbf{a}(t)]$. The operation set $\mathbf{f}(t)$ is the processor frequency of edge servers. The operation set $\mathbf{p}_{tx}(t)$ is the transmission power of data offloading. The parameter $\mathbf{a}(t)$ is the set of bandwidth allocation policy. Therefore, the optimal policy design is formulated as follows.

\begin{flalign}\label{equ:data1}
\mathcal{P}_{1}:\,\,\min_{\mathbf{\Phi }(t)} \,\,\, & \overline{P}\\
\textrm{s.t.}\,\,\,& \sum_{k=1}^{K}a_{k}(t)\leq 1 ,\, \, \, a_{k}(t)\geq \epsilon\,,k \in \mathbb{K},t \in \mathbb{T}.\tag{\theequation a}\label{equ:data1_a}\\
&0\leq f_{k}(t)\leq f_{max}, 0\leq p_{tx,k}(t)\leq p_{tx,max},\nonumber\\&k \in \mathbb{K},t \in \mathbb{T}.\tag{\theequation b}\label{equ:data1_b}\\
&\underset{T\rightarrow \infty }{lim}\frac{{\rm E}[\left | Q_{k}(t) \right |]}{T}=0\,,k \in \mathbb{K}.\tag{\theequation c}\label{equ:data1_c}
\end{flalign}
Eq .(\ref{equ:data1_a}) is the bandwidth allocation constraint, where $\epsilon$ is a system constant. Constraints (\ref{equ:data1_b}) indicates the bound of edge processor frequency and data transmission power. Index $k$ belongs to set $\mathbb{K}$ and time slot $t$ belongs to set $\mathbb{T}$. For delay consideration, constraint (\ref{equ:data1_c}) forces the edge data buffers to be stable, which guarantees the collected data can be processed in a finite delay.

The proposed $\mathcal{P}_{1}$ is obviously a statistical optimization problem with randomly arriving data. Therefore, the policy $\mathbf{\Phi }(t)$ has to be determined dynamically in each time slot. Furthermore, the spatial coupling of bandwidth allocation among edge servers induces great challenge to the problem solution. Instead of solving $\mathcal{P}_{1}$ directly, we propose an online jointly resource management algorithm based on Lyapunov optimization.

\section{Online network management}
\subsection{Lyapunov optimization framework}
The proposed $\mathcal{P}_{1}$ is a challenging statistical optimization problem. By Lyapunov optimization \cite{7274642}, $\mathcal{P}_{1}$ is formulated as a deterministic problem for each time slot, which can be solved with low complexity. The online algorithm can cope with the dynamical random environment while deriving an overall optimal outcome. Based on Lyapunov optimization framework ,the algorithm aims at saving energy while stabilizing the edge data buffers.

For online resource management, the Lyapunov function for each time slot is defined as
\begin{equation}
L(t)=\frac{1}{2}\sum_{k=1}^{K}Q_{k}^{2}(t)
\end{equation}
This quadratic function is a scalar measure of data accumulation in queue. Then the Lyapunov drift is defined as follows.
\begin{equation}
\Delta L(t)={\rm E}[L(t+1)-L(t)]
\end{equation}
To stabilize the network queuing buffer while minimizing the average energy penalty, the data processing policy is determined by minimizing a bound on the following drift-plus-penalty function for each time slot $t$.
\begin{equation}\label{dpp}
\Delta _{V}(t)=\Delta L(t)+V\sum_{k=1}^{K}w_{k}P_{k}(t)
\end{equation}
where $V$ is a positive system parameter which represents the tradeoff between Lyapunov drift and energy cost. $\Delta L(t)$ is the expectation of a random process whose probability distribution is supposed to be unknown. Therefore, an upper bound of $\Delta L(t)$ is estimated so that we can minimize $\Delta_{V}(t)$ without the specific probability distribution. According to the following Lemma \ref{upbound}, we derive a deterministic upper bound of $\Delta L(t)$ for each time slot.
\begin{lemma}\label{upbound}
For an arbitrary policy $\Phi (t)$ constrained by (\ref{equ:data1_a}), (\ref{equ:data1_b}) and (\ref{equ:data1_c}), the Lyapunov drift function is upper bounded by
\begin{equation}\label{upb}
\Delta L(t) \leq -\sum_{k=1}^{N_{U}}Q_{k}(t)(D_{l,k}(t)+D_{tx,k}(t))+C_{lp}
\end{equation}
where $C_{lp}$ is a known constant independent with the system policy and $Q_{k}(t)$ is the current data buffer length. $D_{l,k}(t)$ is the edge processing data bits amount while $D_{tx,k}(t)$ is the offloaded data amount. They are all for time slot $t$.
\end{lemma}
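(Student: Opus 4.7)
The plan is the standard Lyapunov drift computation for a bounded-arrival, bounded-service queue. Starting from the buffer recursion (\ref{qup}) and the elementary inequality $(\max\{x,0\})^{2}\leq x^{2}$, I would first obtain
\begin{equation*}
Q_{k}(t+1)^{2}-Q_{k}(t)^{2} \;\leq\; 2Q_{k}(t)\bigl(A_{k}(t)-D_{k}(t)\bigr) + \bigl(A_{k}(t)-D_{k}(t)\bigr)^{2},
\end{equation*}
where $D_{k}(t)=D_{l,k}(t)+D_{tx,k}(t)$. Summing over $k$, dividing by two, and taking conditional expectation given the buffer state $\mathbf{Q}(t)$ then yields
\begin{equation*}
\Delta L(t) \;\leq\; \sum_{k}Q_{k}(t)\,\mathrm{E}[A_{k}(t)-D_{k}(t)\mid \mathbf{Q}(t)] \;+\; \tfrac{1}{2}\sum_{k}\mathrm{E}\bigl[(A_{k}(t)-D_{k}(t))^{2}\bigm|\mathbf{Q}(t)\bigr].
\end{equation*}

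Next I would separate the policy-dependent pieces from the policy-independent ones. Since $A_{k}(t)$ is a capped Poisson arrival independent of $\Phi(t)$ with $\mathrm{E}[A_{k}(t)]=\lambda_{k}$, the first sum equals $\sum_{k}Q_{k}(t)\lambda_{k}-\sum_{k}Q_{k}(t)(D_{l,k}(t)+D_{tx,k}(t))$. The negative part is exactly the policy-dependent term appearing on the right-hand side of (\ref{upb}); the arrival contribution $\sum_{k}Q_{k}(t)\lambda_{k}$, which the controller cannot influence, together with a uniform upper bound on the squared-difference term, would be absorbed into $C_{lp}$. This is admissible under the lemma's convention because neither depends on the policy $\Phi(t)$, so from the standpoint of the later drift-plus-penalty minimization they behave as constants.

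The substantive step I expect to be the main obstacle is showing that the quadratic residual $\tfrac{1}{2}\sum_{k}\mathrm{E}[(A_{k}(t)-D_{k}(t))^{2}\mid \mathbf{Q}(t)]$ admits a uniform bound. On the arrival side this is immediate from $A_{k}(t)\leq A_{max}$ together with the finite second moment of a capped Poisson. On the service side I would use $D_{l,k}(t)\leq \tau f_{max}/L_{k}$ from (\ref{equ:data1_b}), and bound $D_{tx,k}(t)$ by evaluating the Shannon expression at $p_{tx,k}(t)=p_{tx,max}$, $a_{k}(t)=1$, under an a-priori upper bound on the channel gain $\Gamma_{k}(t)$ that is standard under the i.i.d.\ block-fading assumption adopted in the paper. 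Once this uniform bound is secured, the rest of the derivation is essentially an algebraic rearrangement of the squared recursion, and the claimed inequality (\ref{upb}) follows directly.
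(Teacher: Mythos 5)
Your proposal is correct and follows essentially the same route as the paper: square the queue recursion via $(\max\{x,0\})^2\leq x^2$, isolate the policy-dependent cross term $-\sum_k Q_k(t)(D_{l,k}(t)+D_{tx,k}(t))$, and absorb the arrival cross term and the uniformly bounded quadratic residual into $C_{lp}$. The only cosmetic difference is that you take conditional expectations and replace the arrival term by $\sum_k Q_k(t)\lambda_k$, whereas the paper keeps the realized $\sum_k Q_k(t)A_k(t)$ inside $C_{lp}$ and simply notes it is policy-independent; your handling of the channel-gain bound for $D_{tx,k}(t)$ is likewise equivalent to the paper's use of $\log_2(1+x)\leq x/\ln 2$.
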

\begin{proof}
From equation (\ref{qup}), we have

\begin{align}
Q_{k}^{2}(t+1)& \leq (Q_{k}(t)+A_{k}(t)-(D_{l,k}(t)+D_{tx,k}(t)))^{2} \nonumber \\
&=Q_{k}^{2}(t)-2Q_{k}(t)(D_{l,k}(t)+D_{tx,k}(t)-A_{k}(t))+ \nonumber \\
&(D_{l,k}(t)+D_{tx,k}(t)-A_{k}(t))^{2} \label{drift1}
\end{align}
By (\ref{drift1}), we can subtract $Q_{k}^{2}(t)$ on both side and sum up the inequalities for $k=1, 2, ......, K$, which leads to follows.
\begin{align}
&\frac{1}{2}\sum_{k=1}^{K}\left [ Q_{k}^{2}(t+1)-Q_{k}^{2}(t) \right ] \nonumber \\
&\leq -\sum_{k=1}^{K}Q_{k}(t)(D_{l,k}(t)+D_{tx,k}(t))+\nonumber\\
&\sum_{k=1}^{K}\frac{(D_{l,k}(t)+D_{tx,k}(t)-A_{k}(t))^{2}}{2}+\sum_{k=1}^{K}Q_{k}(t)A_{k}(t)
\end{align}
It has been stated that $A_{k}(t)$ is bounded by $[0,A_{max}]$. Note that the computation and communication resources are limited, then $D_{l,k}(t)$ and $D_{tx,k}(t)$ are also bounded by their corresponding maximum rate. As the maximum processor frequency is $f_{max}$, we have $0 \leq D_{l,k}(t) \leq \frac{\tau f_{max}}{L_{k}}$. Since ${\rm log}_{2}(1+x) \leq \frac{x}{{\rm ln}2}$ and $p_{tx,k}(t) \in [0,p_{tx,max}]$, we have $0 \leq D_{tx,k}(t) \leq \frac{\tau}{N_{0}}p_{tx,max}\gamma_{k}g_{0}(\frac{d_{0}}{d_{k}})^{\theta}$. For simplicity, we separately denote $\frac{\tau f_{max}}{L_{k}}$ and $\frac{\tau}{N_{0}}p_{tx,max}\gamma_{k}g_{0}(\frac{d_{0}}{d_{k}})^{\theta}$ as $D_{l,k,max}$ and $D_{tx,k,max}$. Then the term $(D_{l,k}(t)+D_{tx,k}(t)-A_{k}(t))^{2}$ should be bounded by ${\rm max}\{ A_{max}^{2}, (D_{l,k,max}+D_{tx,k,max})^{2} \}$
Therefore, we have
\begin{align}
&\frac{1}{2}\sum_{k=1}^{K}\left [ Q_{k}^{2}(t+1)-Q_{k}^{2}(t) \right ] \nonumber \\
&\leq -\sum_{k=1}^{K}Q_{k}(t)(D_{l,k}(t)+D_{tx,k}(t))+\nonumber\\
&\sum_{k=1}^{K}\frac{{\rm max}\{ A_{max}^{2}, (D_{l,k,max}+D_{tx,k,max})^{2} \}}{2}+\sum_{k=1}^{K}Q_{k}(t)A_{k}(t)\nonumber\\
&= -\sum_{k=1}^{K}Q_{k}(t)(D_{l,k}(t)+D_{tx,k}(t))+C_{lp}
\end{align}
where $C_{lp}=\sum_{k=1}^{K}\frac{{\rm max}\{ A_{max}^{2}, (D_{l,k,max}+D_{tx,k,max})^{2} \}}{2}+\sum_{k=1}^{K}Q_{k}(t)A_{k}(t)$. When considering a specific time slot $t$, it is straightforward to see that $C_{lp}$ is a deterministic constant. This completes the proof.
\end{proof}
Together with (\ref{dpp}) and (\ref{upb}), the drift-plus penalty function is upper-bounded by
\begin{equation}
\Delta _{V}(t)\leq -\sum_{k=1}^{K}Q_{k}(t)(D_{l,k}(t)+D_{tx,k}(t))+V\sum_{k=1}^{K}w_{k}P_{k}(t)+C_{lp}
\end{equation}
By optimizing the above upper bound of $\Delta_{V}(t)$ in each time slot $t$, the data queuing length can be stabilized on a low level while the power consumption penalty is also minimized.
In Lemma \ref{upbound}, parameter $C_{lp}$ is not affected by system policy. Therefore, it is reasonable to omit $C_{lp}$ in the policy determination problem.

Then the modified problem is defined as follows.
\begin{flalign}\label{equ:data2}
\mathcal{P}_{2}:\,\,\min_{\mathbf{\Phi }(t)} \,\,\, & -\sum_{k=1}^{K}Q_{k}(t)(D_{l,k}(t)+D_{tx,k}(t))\nonumber\\&+V\sum_{k=1}^{K}w_{k}P_{k}(t)\\
\textrm{s.t.}\,\,\,& \sum_{k=1}^{K}a_{k}(t)\leq 1 ,\, \, \, a_{k}(t)\geq \epsilon\,,k \in \mathbb{K}\,\,, t \in \mathbb{T}.\tag{\theequation a}\label{equ:data2_a}\\
&0\leq f_{k}(t)\leq f_{max}, 0\leq p_{tx,k}(t)\leq p_{tx,max},\nonumber\\&k \in \mathbb{K}\,\,, t \in \mathbb{T}.\tag{\theequation b}\label{equ:data2_b}
\end{flalign}

%

\subsection{Solution for $\mathcal{P}_{2}$}
In last subsection, we formulated $\mathcal{P}_{2}$ for deriving optimal policy in each time slot. The optimization objectives include edge computation processor frequency $\mathbf{f}(t)$, data transmission power $\mathbf{p}_{tx}(t)$ and bandwidth allocation $\mathbf{a}(t)$. In this section, we will divide $\mathcal{P}_{2}$ into two subproblems and derive a solution for optimal policy.
\subsubsection{Optimal frequency for edge processor}
We first delete part of the objective function which is not related with $\mathbf{f}(t)$. Then it is straightforward to see that the subproblem with respect to $\mathbf{f}(t)$ is defined as follows.
\begin{flalign}\label{subp1}
\mathcal{P}_{3\text{-}\rm{A}}:\,\,\min_{\mathbf{f }(t)} \,\,\, & -\sum_{k=1}^{K}\frac{\tau Q_{k}(t)}{L_{k}}f_{k}(t)+V\sum_{k=1}^{K}w_{k}\kappa _{k}f_{k}^{3}(t)\\
\textrm{s.t.}\,\,\,&0\leq f_{k}(t)\leq f_{max}\,,k \in \mathbb{K}\,\,, t \in \mathbb{T}.\tag{\theequation a}\label{equ:subp1_a}
\end{flalign}

It is obvious to confirm that $\mathcal{P}_{3\text{-}\rm{A}}$ is a convex optimization problem. Furthermore, there is no coupling among elements in $\mathbf{f }(t)$. Therefore, the optimal processor frequency can be derived separately for each edge server. The stationary point of $\frac{\tau Q_{k}(t)}{L_{k}}f_{k}(t)+Vw_{k}\kappa _{k}f_{k}^{3}(t)$ is $\sqrt{\frac{\tau Q_{k}(t)}{3L_{k}w_{k}\kappa_{k} V}}$. In addition, the optimal processor frequency may also be the boundary $f_{max}$. Then the final solution is given by
\begin{equation}\label{fopt}
f_{k}^{*}(t)={\rm min} \{ f_{max}, \sqrt{\frac{\tau Q_{k}(t)}{3L_{k}w_{k}\kappa_{k} V}} \}\,\,\,(w_{k}>0,V>0)
\end{equation}


\subsubsection{Bandwidth allocation and data transmission power}
We reserve the elements with respect to $\mathbf{p}_{tx}(t)$ and $\mathbf{a}(t)$ and derive the following subproblem.
\begin{flalign}\label{subp2}
\mathcal{P}_{3\text{-}\rm{B}}:\,\,\min_{\mathbf{p_{tx} }(t),\mathbf{a }(t)} \,\,\, & -\sum_{k=1}^{K}Q_{k}(t)D_{tx,k}(t)+V\sum_{k=1}^{K}w_{k}p_{tx,k}(t)\\
\textrm{s.t.}\,\,\,& \sum_{k=1}^{K}a_{k}(t)\leq 1 ,\, \, \, a_{k}(t)\geq \epsilon\,,k \in \mathbb{K}\,\,, t \in \mathbb{T}.\tag{\theequation a}\label{equ:data2_a}\\
&0\leq p_{tx,k}(t)\leq p_{tx,max},k \in \mathbb{K}\,\,, t \in \mathbb{T}.\tag{\theequation b}\label{equ:data2_b}
\end{flalign}

In (\ref{subp2}), we have
\begin{equation}
D_{tx,k}(t)=a_{k}(t)W\tau{\rm log}_{2}(1+\frac{\Gamma_{k}(t)p_{tx,k}(t)}{a_{k}(t)N_{0}W})
\end{equation}
Note that this is a perspective function of $\widetilde{D}(p_{tx}(t))=W\tau{\rm log}_{2}(1+\frac{\Gamma_{k}(t)p_{tx,k}(t)}{N_{0}W})$ with $D_{tx,k}(t)=a_{k}(t)\widetilde{D}(p_{tx}(t)/a_{k}(t))$. It is straightforward to see that $\widetilde{D}(p_{tx}(t))$ is a concave function with respect to $p_{tx}(t)$. Then $D_{tx,k}(t)$ is a jointly concave function with respect to $a_{k}(t)$ and $p_{tx,k}(t)$. Therefore, $\mathcal{P}_{3\text{-}\rm{B}}$ is a convex optimization problem. Though it can be solved directly by conventional solvers, the dimensional curse may still be a large obstacle. In this paper, we employ an iterative algorithm to solve $\mathcal{P}_{3\text{-}\rm{B}}$ in a more efficient way.

Suppose the bandwidth allocation $\mathbf{a }(t)$ is fixed, a sub-problem can be derived as follows.
\begin{flalign}\label{subp3}
\mathcal{P}_{3\text{-}\rm{C}}:\,\,\min_{\mathbf{p_{tx} }(t)} \,\,\, & -\sum_{k=1}^{K}Q_{k}(t)a_{k}(t)W\tau {\rm log}_{2}(1+\frac{\Gamma_{k}(t)p_{tx,k}(t)}{a_{k}(t)N_{0}W})\nonumber\\&+V\sum_{k=1}^{K}w_{k}p_{tx,k}(t)\\
\textrm{s.t.}\,\,\,&0\leq p_{tx,k}(t)\leq p_{tx,max},k \in \mathbb{K}\,\,, t \in \mathbb{T}.\tag{\theequation a}\label{equ:subp3_a}
\end{flalign}
As $\mathbf{a }(t)$ is fixed, $p_{tx,k}(t)$ in $\mathbf{p_{tx} }(t)$ are independent with each other. Therefore, we can separately obtain $p_{tx,k}^{*}(t)$ with respect to each index $k$. The stationary point of system cost function is $a_{k}(t)W[\frac{Q_{k}(t)\tau}{Vw_{k}{\rm ln}2}-\frac{N_{0}}{\Gamma_{k}(t)}]$. Considering constraint (\ref{equ:subp3_a}), the optimal solution of $\mathcal{P}_{3\text{-}\rm{C}}$ is
\begin{equation}\label{ptxopt}
p_{tx,k}^{*}(t)={\rm min}\{ {\rm max}\{ a_{k}(t)W[\frac{Q_{k}(t)\tau}{Vw_{k}{\rm ln}2}-\frac{N_{0}}{\Gamma_{k}(t)}],0 \},p_{tx,max} \}
\end{equation}

Suppose $\mathbf{p_{tx} }(t)$ is figured out, a sub-problem to optimize $\mathbf{a}(t)$ is derived as follows.
\begin{flalign}\label{subp4}
\mathcal{P}_{3\text{-}\rm{D}}:\,\,\min_{\mathbf{a}(t)} \,\,\, & -\sum_{k=1}^{K}Q_{k}(t)a_{k}(t)W\tau {\rm log}_{2}(1+\frac{\Gamma_{k}(t)p_{tx,k}(t)}{N_{0}Wa_{k}(t)})\\
\textrm{s.t.}\,\,\,&\sum_{k=1}^{K}a_{k}(t)\leq 1 ,\, \, \, a_{k}(t)\geq \epsilon\,,k \in \mathbb{K}\,\,, t \in \mathbb{T}.\tag{\theequation a}\label{equ:subp4_a}
\end{flalign}
In this case, elements in $\mathbf{a}(t)$ are coupled with each other. Therefore, we employ dual decomposition \cite{xiao2004simultaneous}.
For $\mathcal{P}_{3\text{-}\rm{D}}$, the Lagrange function is
\begin{flalign}
L(\mathbf{a}(t),\lambda)&=-\sum_{k=1}^{K}Q_{k}(t)a_{k}(t)W\tau {\rm log}_{2}(1+\frac{\Gamma_{k}(t)p_{tx,k}(t)}{N_{0}Wa_{k}(t)})\nonumber\\&+\lambda(\sum_{k=1}^{K}a_{k}(t)-1)
\end{flalign}
To decouple $a_{k}(t)$, we set
\begin{flalign}\label{dual-d}
g_{k}(\lambda)=&\underset{a_{k}(t)\geq \epsilon }{inf}(-Q_{k}(t)a_{k}(t)W\tau {\rm log}_{2}(1+\frac{\Gamma_{k}(t)p_{tx,k}(t)}{N_{0}Wa_{k}(t)})\nonumber\\&+\lambda a_{k}(t))
\end{flalign}

Then we have $L(\lambda)=\sum_{k=1}^{K}g_{k}(\lambda)-\lambda$, the dual sub-problem of $\mathcal{P}_{3\text{-}\rm{D}}$ is

\begin{flalign}\label{subp5}
\mathcal{P}_{3\text{-}\rm{E}}:\,\,\max_{\lambda}\,\,\, & \sum_{k=1}^{K}g_{k}(\lambda)-\lambda\,\,\, & \\
\textrm{s.t.}\,\,\,&\lambda \geq 0.\tag{\theequation a}\label{equ:subp5_a}
\end{flalign}
Dual sub-problem $\mathcal{P}_{3\text{-}\rm{E}}$ can be solved by gradient decent method. The corresponding gradient is $\sum_{k=1}^{K}a_{k}^{*}(t)-1$, where $a_{k}^{*}(t)$ achieves the lower bound in (\ref{dual-d}). As stated before, this is a convex optimization problem. Therefore, either the stationary point or $\epsilon$ achieves the lower bound. The value of stationary point is the zero point of the derivative function, which can be derived by bisection method. By iteratively updating $\lambda$ and corresponding $a_{k}^{*}(t)$, we can finally derive the optimal bandwidth allocation and transmission power.

In summary, the optimal policy for online edge data processing can be illustrated by chart in Fig .\ref{online_alg}.
\begin{figure}
  \centering
  \includegraphics[width=0.94\columnwidth]{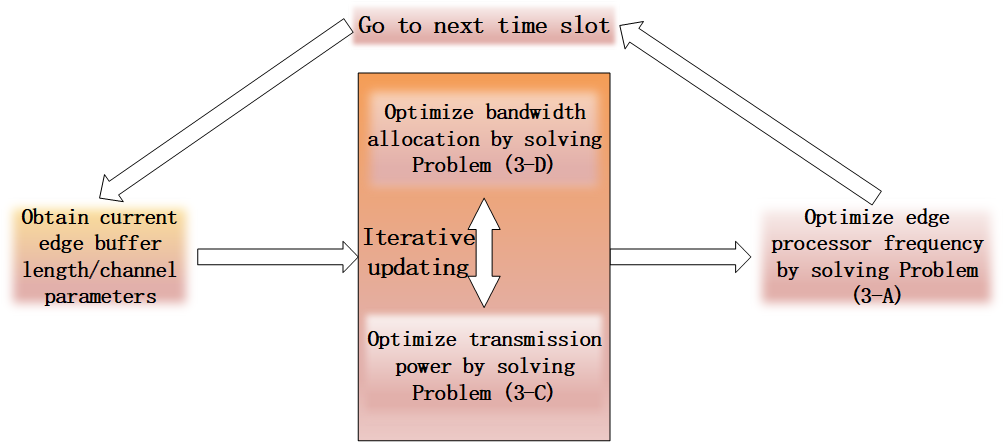}\\
  \caption{Process of network management for online edge data processing}\label{online_alg}
\end{figure}

\section{Simulation results}
We consider simulations of a network composed of $K$ edge servers and a center cloud. It is assumed that the center cloud has an equal distance with $K$ edge servers, which is set as 200. $A_{k}(t)$ satisfies poisson distribution with rate $\lambda_{k}$. Besides, for consideration of maximum data collection speed in real system, $A_{k}(t)$ is constrained in $[0,A_{max}]$. In simulations, if the randomly generated $A_{k}(t)$ is larger than $A_{max}$, it will be set as $A_{max}$. The small scale fading channel power gain $\gamma_{k}(t)$ is generated by exponential distribution $Exp(1)$. Besides, other parameter sets include $\tau=0.5s$, $W=2 MHz$, $N_{0}=-167 dBm/Hz$, $g_{0}=-40 dB$, $\theta=4$, $w_{k}=1/K$, $d_{0}=1m$, $f_{max}=2 GHz$, $\kappa_{k}=10^{-26}$, $p_{tx,max}=5 W$, $K=7$, $L_{k}=3000 cycles/bit$ and $\epsilon=10^{-3}$.

\begin{figure}[tbp]
\centering
\subfigure[] { \label{Fig:pv}
\includegraphics[width=0.45\columnwidth]{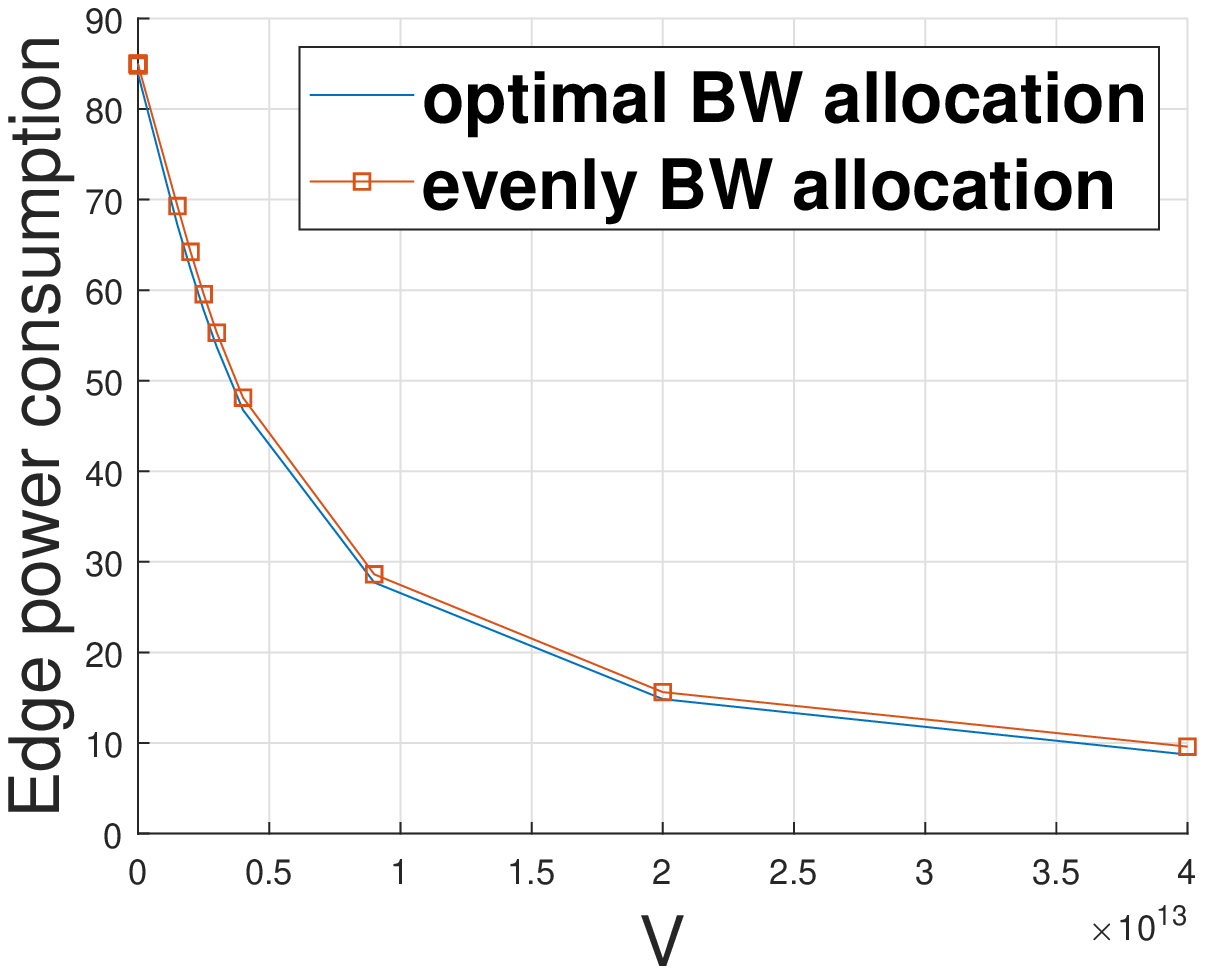}}
\subfigure[]{ \label{Fig:lv}
\includegraphics[width=0.45\columnwidth]{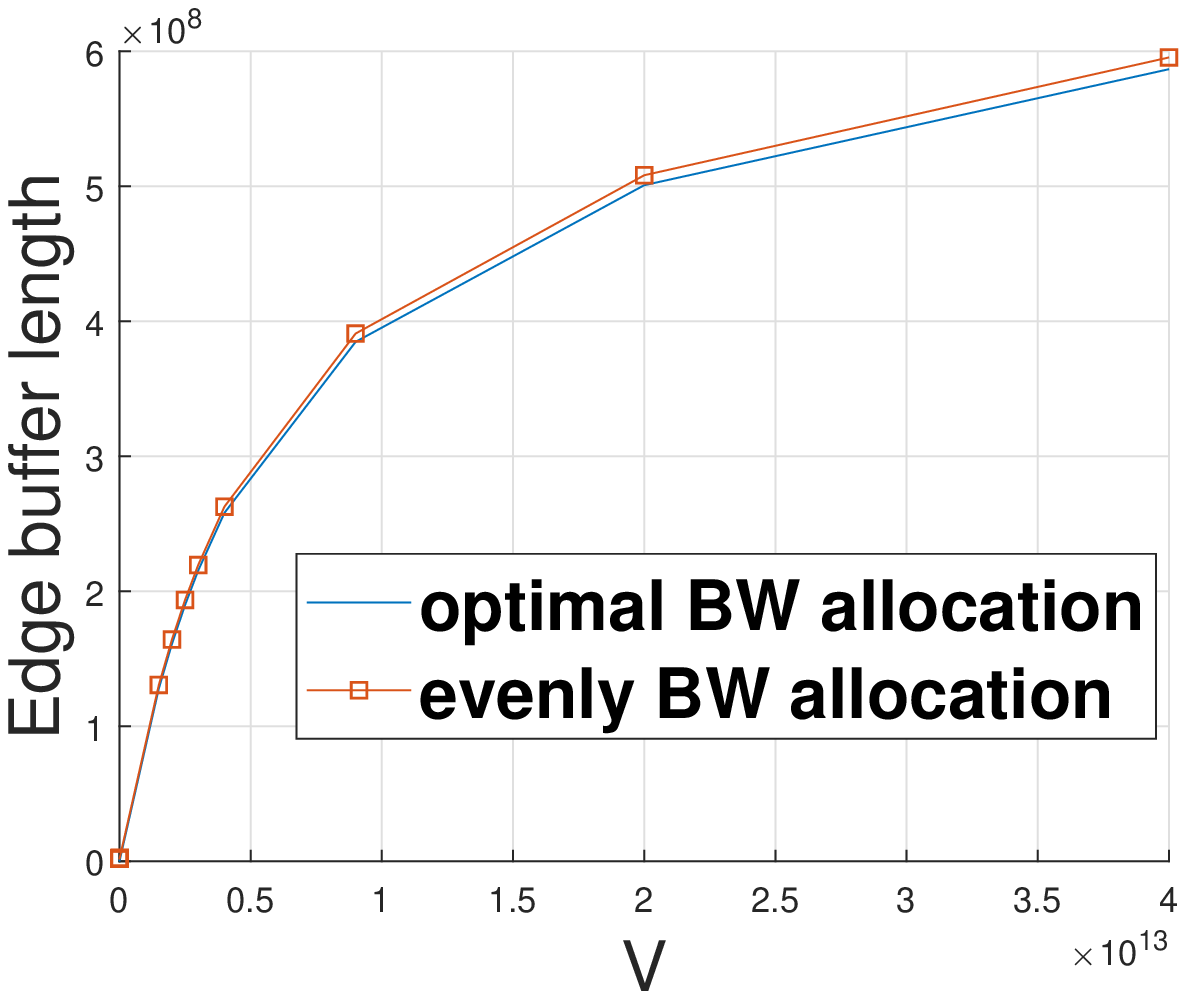}}
\subfigure[] { \label{Fig:pr}
\includegraphics[width=0.45\columnwidth]{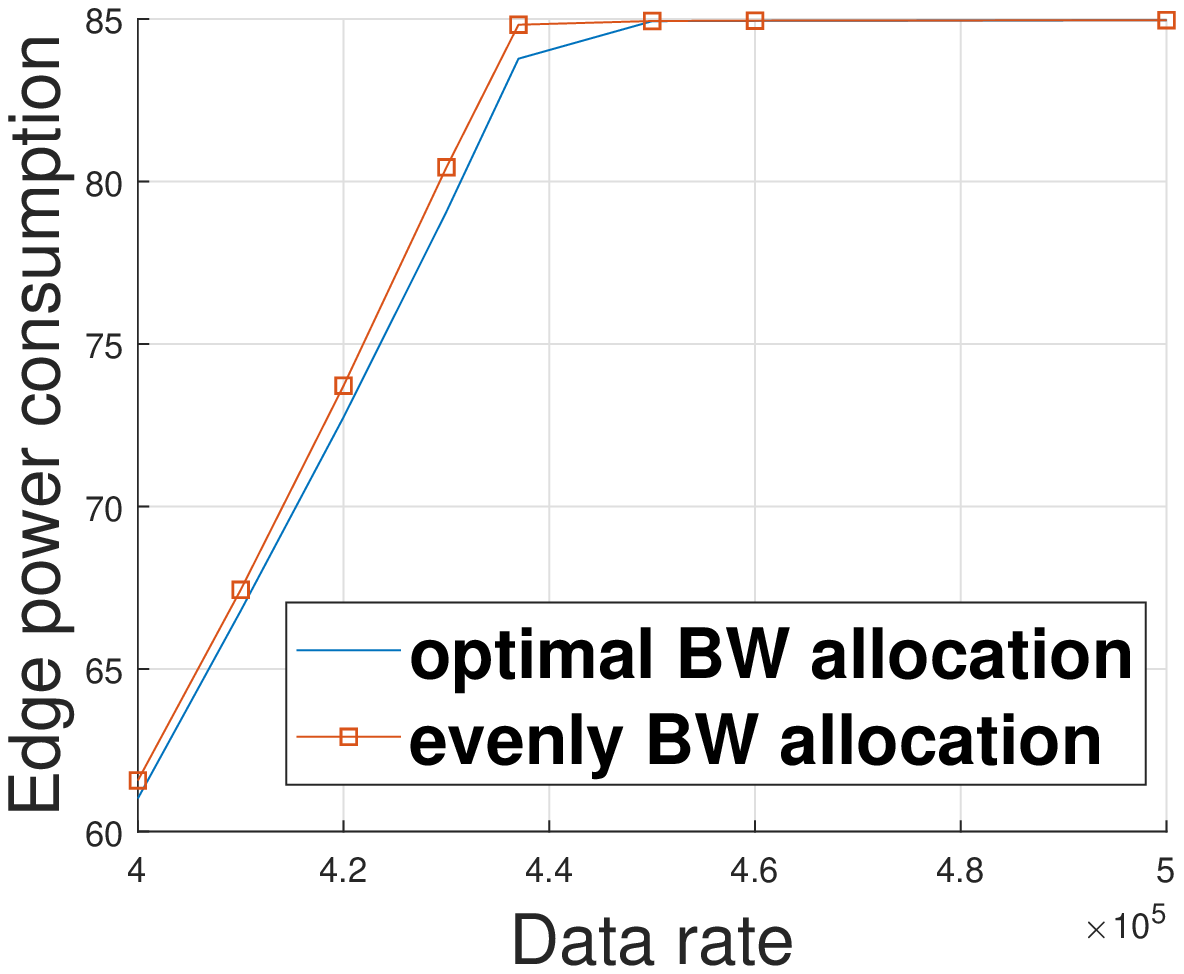}}
\subfigure[]{ \label{Fig:lr}
\includegraphics[width=0.45\columnwidth]{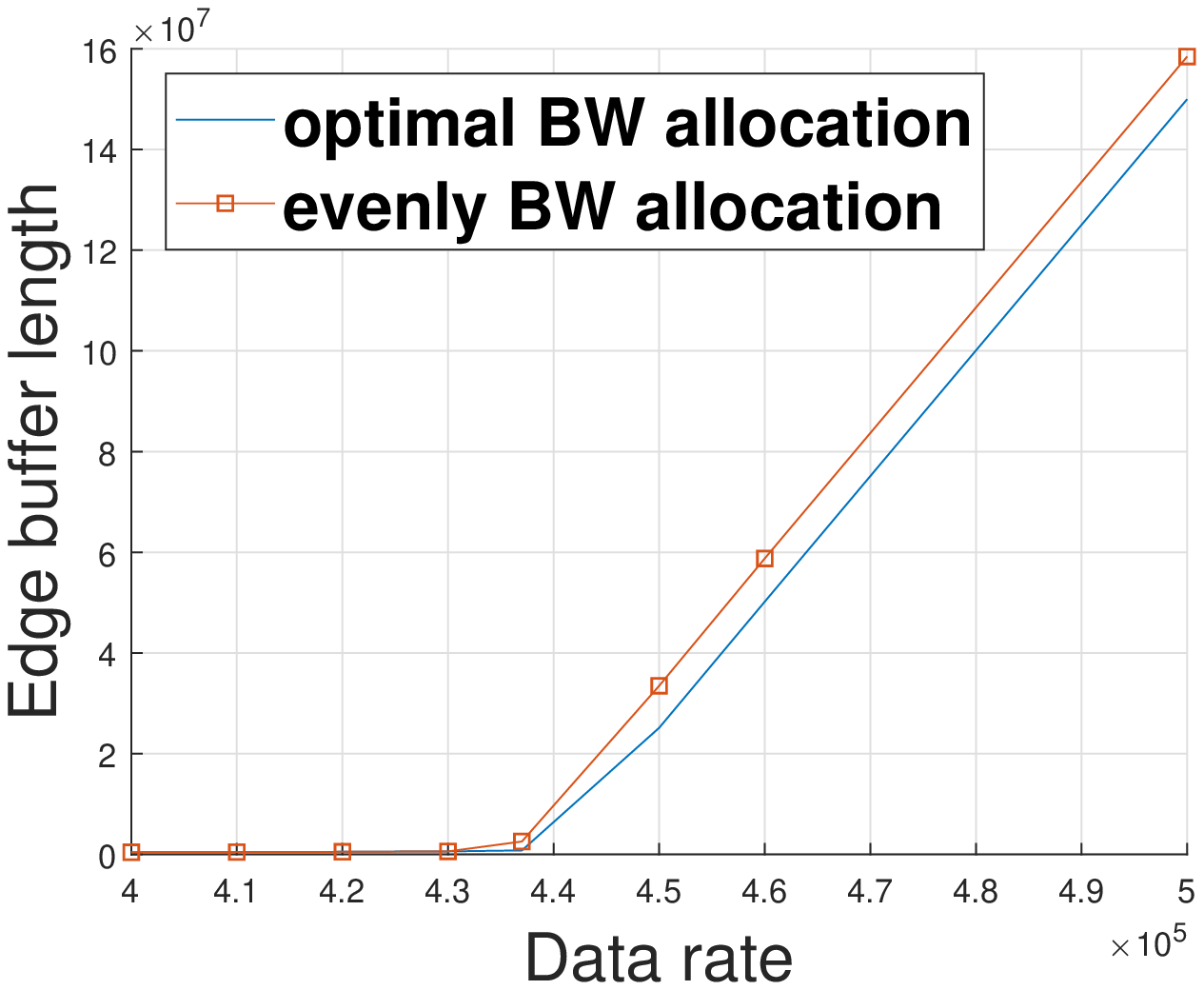}}
\caption{Edge power consumption and buffer length vs $V$ and data rate. Date rate of edges are equal with $\lambda_{k}=\lambda$. Sub-figures (a) and (b) show energy cost and buffer length vs control parameter $V$, where data rate $\lambda=4.37\times 10^{5}$. Sub-figures (c) and (d) show energy cost and buffer length vs data rate, where $V=10^{10}$. The results are derived by averaging records within 5000 time slots. } \label{cost}
\end{figure}

\begin{figure}
  \centering
  \includegraphics[width=0.9\columnwidth]{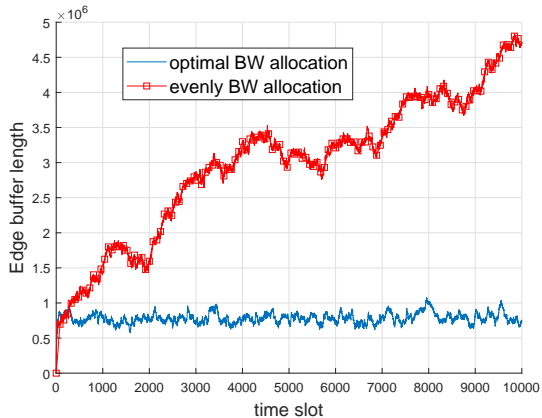}
  \caption{Edge data buffer length vs time slot for optimal bandwidth allocation and evenly bandwidth allocation ($\lambda_{k}=4.37\times 10^{5}, V=10^{10}$).}\label{bfl}
\end{figure}

The system performance in terms of power consumption and edge buffer length is first tested by two network management strategies. The results are shown in Fig .\ref{cost}. The curves marked by squares are obtained by evenly allocating bandwidth. Except for $a_{k}(t)=\frac{1}{K}$, its $f_{k}(t)$ and $p_{tx,k}(t)$ are both optimized. That is, the bandwidth allocation is separated from computation resource management. Curves without marks are obtained by optimizing bandwidth allocation. Fig .\ref{Fig:pv} shows that the average power consumption monotonically decreases with respect to control parameter $V$. Fig .\ref{Fig:lv} shows that the average edge buffer length monotonically increases with respect to $V$. By (\ref{fopt}) and (\ref{ptxopt}), the increase of $V$ will decrease $f_{k}(t)$ and $p_{tx,k}(t)$, which reduces energy consumption while lowering down the edge processing speed. Meanwhile, as shown in Fig .\ref{Fig:pr} and Fig .\ref{Fig:lr}, increasing data rate results in the rise of power consumption and edge buffer length. However, the performance deteriorates when taking evenly bandwidth allocation. This shows the importance of jointly optimizing bandwidth allocation and computation resources.

Fig .\ref{bfl} shows the average edge buffer length with respect to time. The curve marked by square is obtained by evenly bandwidth allocation. In cases of high data rate, the strategy with optimal bandwidth allocation achieves a stable edge buffer length. By Little's Law, this indicates a stable data processing delay, which is crucial for online data processing. However, the strategy with evenly bandwidth allocation shows an awful performance. In cases of high data rate with randomly arriving data amount, optimal bandwidth allocation tend to allocate more resources to edges with larger data amount. This explains its performance of stabilizing edge buffer length. Therefore, it is important to jointly consider bandwidth allocation in network management.

\section{Conclusion}
In this paper, we investigated network management strategies for online edge data processing in IoT. Focused on saving energy while stabilizing delay, an online MEC-based network management algorithm was proposed based on Lyapunov optimization framework. In cases of low data rate, edge processor frequency and transmission power are dynamically reduced for saving energy. In cases of high data rate, the bandwidth resources are optimally allocated for stabilizing data processing delay. By theoretical analysis and simulation tests, we validated the performance of the proposed dynamical network management algorithm with respect to system design parameters. The online policy is obtained by current data buffer length regardless of data source probability distributions.


%

%



\ifCLASSOPTIONcaptionsoff
  \newpage
\fi

\end{document}